\newtheorem{A}{Assumption}
\newtheorem{D}{Definition}
\newtheorem{T}{Theorem}
\newtheorem{Prop}{Proposition}
\begin{document}

\title{Motion Optimization for Musculoskeletal Dynamics:\\ A Flatness-Based Polynomial Approach}

\author{Hanz Richter,~\IEEEmembership{Member,~IEEE,}
        and~Holly Warner
\thanks{Authors are with the Department
of Mechanical Engineering, Cleveland State University, Cleveland,
OH, 44115 USA e-mail: h.richter@csuohio.edu}
}


\maketitle

\begin{abstract}
A new approach for trajectory optimization of musculoskeletal dynamic models is introduced. The model combines rigid body and muscle dynamics described with a Hill-type model driven by neural control inputs. The objective is to find input and state trajectories which are optimal with respect to a minimum-effort objective and meet constraints associated with musculoskeletal models. The measure of effort is given by the integral of pairwise average forces of the agonist-antagonist muscles. The concepts of flat parameterization of nonlinear systems and sum-of-squares optimization are combined to yield a method that eliminates the numerous set of dynamic constraints present in collocation methods. With terminal equilibrium, optimization reduces to a feasible linear program, and a recursive feasibility proof is given for more general polynomial optimization cases. The methods of the paper can be used as a basis for fast and efficient solvers for hierarchical and receding-horizon control schemes. Two simulation examples are included to illustrate the proposed methods.        
\end{abstract}

\IEEEpeerreviewmaketitle

\section{Introduction}\label{intro}
\IEEEPARstart{M}{usculoskeletal} system (MSS) models are used to describe the dynamics of human movement. MSS combine rigid-body models with nonlinear dynamic descriptions of muscle force production with neural stimulus signals as control inputs~\cite{geyer2019neuromuscular,PandySHBED,nigg07}. MSS models form the analytical basis for studies aimed at the identification of human control. A postulated human control system must not only support prediction accuracy, but also through its properties reflect key traits of human movement. Specifically, humans can reach and regulate positions or track trajectories if so demanded. Also, humans maintain stability and can complete their tasks under unknown loads. Finally, the MSS is an over-actuated system and incorporates a redundancy resolution mechanism that has been widely regarded to be a form of optimal control~\cite{acker10} based on the minimization of effort under constraints. An account of the optimization objective functions that have been used appears in~\cite{erdemir07}. 

The specification of prospective control systems with the above characteristics is a challenging task, because MSS are large-scale, nonlinear over-actuated systems with constraints in the states and control inputs.  By and large, the literature is concerned with finite-horizon, open-loop optimal controls~\cite{todorov02}. Solution methods based on collocation \cite{witkin1988spacetime} are widely used, whereby candidate solutions are discretized into a large number of temporal nodes~\cite{betts1998survey,betts2010practical,kaplan2001predictive}. The vector of states and control values at such nodes constitutes the search variable and is used to evaluate the cost function and problem-related constraints. Dynamic constraints are accounted for by using a finite-differences approximation to the state derivative and introducing pertinent equality constraints, one for each pair of successive nodes.  The result of this formulation is a large-scale nonlinear static optimization problem. For instance, the model used in Section~\ref{simex} has 16 state variables and 6 control inputs. A direct collocation method with 100 temporal nodes and backward Euler discretization would result in 2200 search variables and 1584 constraints associated with the dynamics alone.

In contrast to finite-horizon optimal controls, human motion strategies may not involve a known, finite duration. Human control laws that maintain their properties for an indefinite period of time are thus required. Receding-horizon (RH) approaches such as Model Predictive Control (MPC) provide a method to use open-loop solutions indefinitely, establishing a feedback process~\cite{Mehrabi17}. Feedback introduces the capacity for adjustment to unanticipated changes in reference trajectories and external disturbances.

This paper introduces a flatness parameterization of the MSS models having a meaningful biomechanical interpretation. Moreover, a two-stage optimal control problem is formulated using the tools of sum-of-squares optimization and semidefinite programming. This approach removes all dynamic constraints from the optimization, significantly reducing the size of the problem. Further, two theoretical results are given to support the use of the approach as a fast, efficient solver to be used in conjunction with feedback implementations of optimal control such as MPC. Specifically, recursive feasibility of the second-stage optimization and its reduction to a linear program are shown.
 \subsection{Overview of Flatness Parameterizations}\label{flat}
Differential flatness is a property of dynamical systems that was extensively studied by Fliess and co-workers~\cite{FliessIJC95} in the early 1990s. Flatness implies that it is possible to parameterize each state and input of the system in terms of a set of variables known as flat outputs, without integration. When the system has $m$ inputs, exactly $m$ flat outputs are required.  Once a suitable set of flat outputs has been defined, congruent system trajectories and control inputs can be generated by direct evaluation, i.e., without having to solve differential equations. 

Physical systems modeled with flat dynamics include robot manipulators and classes of mobile robots, aircraft, electromechanical systems and chemical reactors. Often, establishing flatness is guided by physical insight into the system, as it has been done in this paper for MSS. \emph{Co-contraction},  the average force produced by a pair of agonist-antagonist (opposing) muscles is a fundamental indicator of the effort used to produce movement. 

A set of outputs formed by the joint coordinates and muscle co-contractions is shown below to be a flat parameterization for MSS. These systems are over-actuated, in the sense that more than one control input is collocated with each joint of the rigid body subsystem. However, there are more states than control inputs, and the muscular actuation and rigid body subsystems are dynamically coupled. As long as an invertible transformation mapping states and controls to flat outputs and their derivatives is specified, over-actuation is not a factor in establishing flatness.

Motion planning by exploiting flatness can be considered a form of inversion-based trajectory optimization, which has been studied for two decades~\cite{vanNieuwstadt98,martinCDSreport,Louembet2010}. In this context, flatness parameterizations produce the maximum possible reduction in the number of optimization variables, while direct collocation produces none~\cite{PetitIFAC2001}. 
\subsection{Overview of Sum of Squares Optimization}\label{sos}
A multivariate polynomial $p(x)$ is a sum of squares (SOS) if $p(x)=\sum_{i=1}^s h_i^2(x)$ for some polynomials $h_i(x)$, $i=1,2...s$, with $x=[x_1,x_2...x_n]$. All SOS polynomials are non-negative, but the converse does not hold. However, SOS and non-negativity are equivalent for certain important cases, namely univariate polynomials, all polynomials of degree 2, and bivariate polynomials of degree 4~\cite{Hilbert33,Reznick96}. 

The following result~~\cite{Parrilo2003} can be used to verify or enforce that a univariate polynomial is SOS: $p(x)$ of degree $2d$ is SOS if and only if there is a positive semidefinite matrix $Q$ (called the Gram matrix) such that $p(x)=z_m^TQz_m$, where $z_m(i)=t^{i-1}$, $i=1,2,..d+1$ is the vector of monomials of degree no larger than $d$. When $Q$ exists, the SOS decomposition is revealed by finding $V$ such that $Q=V^TV$ (Cholesky factorization) and observing that $p(x)=||Vz_m||^2$. 

SOS methods are attractive in optimization because they permit the reformulation or relaxation of hard, non-convex problems into much more tractable versions that may be efficiently solved through semidefinite programming (SDP)~\cite{Ahmadi2013, Parrilo2003}. Linear optimal objectives subject to SOS constraints become SDP problems for which efficient solution methods are available. Many applications to systems and control theory and optimization have been developed in the SOS framework, including Lyapunov function searches and region of attraction computations and applications to robotics~\cite{Majumdar14}.

SOS techniques are particularly well-suited to tackle motion optimization through the flatness parameterization of MSS models. Non-negativity constraints arise naturally for muscle forces, which are fundamentally tensile. Likewise, total co-contraction leads to a linear objective function and an SDP formulation.
\section{Musculoskeletal System Model}~\label{MSSmodel}
The MSS is modeled with two coupled subsystems: a serial arrangement of $N$ rigid links and a set of $m$ muscle actuators. A set of kinematic and force constraints links the subsystems. This is represented in Fig.~\ref{muscleLink}. The linkage is described by the standard robot dynamics:
\begin{equation}
M(q)\ddot{q}+C(q,\dot{q})\dot{q}+g(q)=\tau \label{robdyn}
\end{equation}
where $q$ is the $N$-vector of joint angles, $M(q)$ is the mass matrix, $C(q,\dot{q})$ is a matrix capturing centripetal and Coriolis effects and $g(q)$ is the torque due link weights. Vector $\tau$ represents the torques exerted by the muscles and constitutes the coupling from muscle dynamics. 
\begin{figure}
\centering \includegraphics[width=\linewidth]{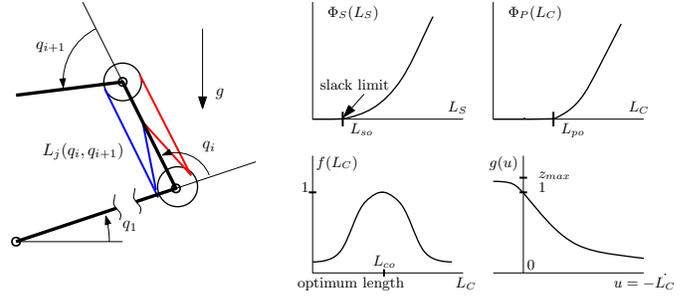} \caption{Left: Generic section of a rigid linkage driven by muscle actuators. Muscles are found in opposing pairs (agonist-antagonist), shown in blue and red. Right: General shapes of the four nonlinear functions involved in the dynamic model of the muscle actuators.}\label{muscleLink}
\end{figure}

Individual muscles are described using the Hill dynamic model~\cite{Hill38,katz1939relation,Zajac}, which considers muscles as a series-parallel arrangement of nonlinear springs and a force generator called contractile element (CE). The series elastic element (SE) represents the tendon, and it produces tension forces beyond a minimum length $L_{so}$ called the slack limit. Below the slack limit, the tendons produce zero force. That is, the SE is incapable of producing compressive forces. Tendons are attached to the linkage, producing joint torques as described in Section~\ref{kinFcoupling}.
The contractile element (CE) and the parallel elastic element (PE) have the same length, $L_C$. The PE represents the inherent elasticity of the muscle fibers, and the corresponding force is also non-negative.  
Tendon and PE forces are nonlinear functions of the respective lengths, given by
\begin{eqnarray}
 F_T &=& \Phi_S(L_S) \label{SEforce} \\
 F_P &=& \Phi_P(L_C). \label{PEforce}
\end{eqnarray}
The CE produces a tension force in proportion to an activation input $a$ and two factors that depend on the length $L_C$ and contraction rate $-\dot{L}_C$ of the CE. That is:
\begin{equation}
 F_{CE}=aF_{max}f(L_C)g(-\dot{L}_C) \label{CEforce}
\end{equation}
where $g$ is the rate dependence function with the property $g(0)=1$, $f$ is the length dependence function with the property $0<f(L_C) \leq 1$ for all $L_C$. The maximum $f=1$ is achieved at $L_C=L_{co}$, called the optimal CE length. Thus, when $a=1$, $\dot{L}_C=0$ and $L_C=L_{co}$, the CE produces its maximum isometric force, $F_{max}$. Activations are required to satisfy $a \in [0,1]$.  The general shapes of functions $f$ and $g$ are shown in Fig.~\ref{muscleLink} (right). The results of this paper do not depend on the exact shapes of these functions. For notational convenience define $u=-\dot{L}_C$.  Functions $f$ and $g$ are only assumed to satisfy the following properties:
\begin{eqnarray}
0 < f(L_C) \leq  1=f(L_{co}) & \mbox{ for all } L_C>0 \label{hillF1} \\
0 < g(u)  \leq z_{max}  & \mbox{ for all } u \label{hillG1} \\
g(0)= 1, &\;  g \mbox{ is decreasing and invertible}   \label{hillG2}
\end{eqnarray}
where parameter $z_{max}$ is known as maximum eccentric to isometric ratio, found to be near 1.5 in human muscles~\cite{Zajac}. Specific definitions of $\Phi_S$, $\Phi_P$, $f$ and $g$ are found in the literature that satisfy the above assumptions.

The activation $a$ is often modeled as the output of first-order lag dynamics with a variable time constant~\cite{Zajac}, where the input is the neural excitation $n$, regarded as the control input to be designed. The mapping from $n$ to $a$ is known as activation dynamics. Since $a$ is constrained to $[0,1]$ and the activation dynamics are modeled with a unity gain, $n$ is also constrained to this interval. In experimental biomechanics, neural inputs are often measured with electromyography sensors, whose raw signals are processed to normalize $n$ to this range.  A mathematical description of the activation dynamics is given below:
\begin{equation}
\dot{a} = \sigma(n)(n-a) \label{actdyn}
\end{equation}
where $\sigma(n)$ is a function satisfying $0< T_{min}^{-1} \leq \sigma(n) \leq T_{max}^{-1}$ for all $n$, with $\sigma(0)=T_{min}^{-1} $, $\sigma(1)=T_{max}^{-1}$ and $\sigma$ monotonically decreasing between these values. $T_{max}$ and $T_{min}$ are the maximum and minimum time constants. 
This model captures the empirically observed difference between time constants when activating ($\dot{a}>0$) and deactivating ($\dot{a}<0$) the muscle.

For the subsequent model development, it is convenient to express $u$ as a function of other variables. The total length $L$ of a muscle is the sum of the SE and CE lengths:
\begin{equation}
L=L_{S}+L_{C}. \label{L}
\end{equation}
Also, the tendon force is the sum of the CE and PE forces: 
\begin{equation}
F_T=\Phi_S(L_{S})=F_{CE}+\Phi_P(L_{C}). \label{FT}
\end{equation}
Solving for $-\dot{L}_C$ from Eq.~\ref{CEforce} and using Eqs.~\ref{L} and~\ref{FT} gives
\begin{equation}
-\dot{L}_C=u=g^{-1}(z) \label{udef}
\end{equation}
where $z$ is defined as
\begin{equation}
 z= \frac{\Phi_S(L_S)-\Phi_P(L-L_S)}{aF_{max}f(L-L_S)}. \label{zdef}
\end{equation}

\subsection{Kinematic and Force Constraints}\label{kinFcoupling}
The length of muscle $j$ is defined by Eq.~\ref{L} as $L_j=L_{Sj}+L_{Cj}$. However, each end of a muscle is attached to a link which is in motion. Therefore $L_j$ must also be a function of the link angles $q$. The characteristics of $L_j(q)$ depend on how muscle-linkage attachment is modeled. Here we adopt a simple and physiologically meaningful linear description that has been also been used in~\cite{Jagodnik}:
\begin{equation}
 L_j(q)=l_{oj}-\sum_{i=1}^N d_{ij}q_i \label{Lj}
\end{equation}
where $l_{oj}$ are the lengths at $q=0$ and $d_{ij}$ coincides with the moment arm of muscle $j$ upon joint $i$ by the principle of virtual work~\cite{Jagodnik}. This can be expressed in vector form as in Eq.~\ref{tau} below, where $\Phi_S$ denotes a vector function with components $\Phi_{Sj}(L_{S})$, $L_S$ is the vector of SE lengths and the moment arm matrix $A$ is defined by $A(i,j)=d_{ij}$, $i=1,2..N$, $j=1,2,..m$.

The kinematic and force constraints are then summarized by combining Eqs.~\ref{L} and~\ref{Lj} and including the torque relationship:
\begin{eqnarray}
L_S &=& l_o-A^Tq-L_C \label{LS} \\
\dot{L}_S &=& -A^T\dot{q}+u \label{LSdot} \\
\tau(L_S) &=& A \Phi_S(L_S) \label{tau}
\end{eqnarray}
Although this is not the only possibility, the components of $L_S$ are chosen to be state variables for the muscles, along with $a$, the vector of activations. From Eqs.~\ref{udef} and~\ref{zdef} applied to vector components and from the kinematic constraints, $u$ can be expressed as a function of the states.
\subsection{MSS Dynamics}
The MSS dynamics are represented by the following equations:
\begin{eqnarray}
M(q)\ddot{q}+C(q,\dot{q})\dot{q}+g(q) & = & A\Phi_S(L_S) \label{MSS1} \\
\dot{L}_S &=& -A^T\dot{q}+u \label{MSS2} \\
\dot{a}_j  &=& \sigma_j(n_j)(n_j-a_j) \label{MSS3} 
\end{eqnarray}
where $u_j=g_j^{-1}(z_j)$ and 
\begin{equation}
z_j= \frac{\Phi_S(L_{Sj})-\Phi_P(L_j(q)-L_{Sj})}{a_jF_{max}f(L_j(q)-L_{Sj})} \label{zjdef}
\end{equation}
for $j=1,2..m$. Note that $f,g,\Phi_{S}$ and $\Phi_{P}$ are allowed to be different for each muscle, but this has not been reflected in the notation for simplicity.
\section{Flatness Parameterization of the MSS}~\label{MSSflat}
The MSS is differentially flat, as shown by the proposed physically-motivated parameterization below. The linkage has $2N$ state variables ($N$ positions and $N$ velocities), and the muscles have $2m$ state variables ($m$ SE lengths and $m$ activations). The model has therefore $n=2(N+m)$ state variables and $m$ control inputs. Thus, $m$ flat outputs must be specified. It is assumed that $p=m-n \geq 1$, that is, there is at least one redundant muscle. 

Let $y$ be the proposed flat output vector. Define its first $N$ components as $y_i=q_i$, $i=1,2...N$ and the remaining $p$ components by $y_i=Y_i$, with $Y=E\Phi_S(L_S)$, where $E$ is a $p$-by-$m$ matrix defining $p$ flat outputs as the averages of the agonist-antagonist muscle tendon forces. If $p>m/2$, the extra rows of $E$ are defined as arbitrary linear combinations of $\Phi_{Sj}(L_{S})$, under the restrictions listed below. Define matrix $C$ by:
\begin{equation}
 C =   \left [ \begin{array}{c}
                 A \\ 
                 E
                \end{array} \right ]. \label{Cdef}
\end{equation}

To establish flatness and ensure the feasibility of subsequent optimization problems, several assumptions are made. Let $\bf{1}_m$ denote a vector with all $m$ components equal to one.
\begin{A}\label{A1}
The moment arm and flat ouput definition matrices $A$ and $E$ satisfy the following conditions:
\begin{enumerate}
 \item $C$ is a non-singular $m$-by-$m$ matrix.
 \item The row sums of $E$ equal 1. That is, $\sum_{j=1}^m E_{ij}=1$ for $i=1,2,..p$.
 \item Let $C^{-1}$ be partitioned as $C^{-1}=[C_\tau | C_Y]$, where $C_Y$ has $p$ columns. Let $\sigma_\tau$ denote the vector of row sums of $A$. The following must hold:
 \[{\bf{1}}_m-C_\tau \sigma_\tau \succ 0. \]
\end{enumerate}
\end{A}
The above conditions are immediately verified for an agonist-antagonist moment arm matrix $A$ with symmetric muscle attachments ($\sigma_\tau=0$) and $E$ based on co-contractions and tendon forces. The last assumption is placed to guarantee feasibility of the reduced linear program. 

\begin{Prop}
Consider the transformation $\Psi$ defining the flat outputs $y$ from system states:
\begin{eqnarray}
y=\Psi(q,\dot{q},L_S) &=& \left [ \begin{array}{c} q \\ Y \end{array} \right ] \label{ydef1} \\
 Y &=&  E \Phi_S(L_S) \label{ydef2}
\end{eqnarray}
and suppose Assumption~\ref{A1} holds. Then $\Psi$ is invertible, providing a flatness parameterization of the MSS.
\end{Prop}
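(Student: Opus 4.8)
The plan is to establish flatness \emph{constructively}: rather than verifying invertibility of $\Psi$ in isolation, I would exhibit the inverse map that reconstructs every state and the neural input $n$ from the flat output $y$ and a finite number of its time derivatives. Because the first $N$ components of $y$ are the joint angles $q$, the signals $q$, $\dot q$ and $\ddot q$ are obtained immediately by differentiation; in particular the argument $\dot q$ appearing in $\Psi$ carries no information beyond the derivative of the first flat outputs and needs no separate inversion.

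The structural core is the recovery of the SE lengths $L_S$. Evaluating the left-hand side of \ref{MSS1} using the already-available $q,\dot q,\ddot q$ yields the torque vector $\tau = A\Phi_S(L_S)$. Stacking this with the defining relation $Y = E\Phi_S(L_S)$ and using the matrix $C$ of \ref{Cdef} gives the single linear system $C\,\Phi_S(L_S) = [\,\tau^{T}\ Y^{T}\,]^{T}$. Condition~1 of Assumption~\ref{A1} (nonsingularity of $C$) lets me solve this for the tendon-force vector $\Phi_S(L_S)$, after which a componentwise inversion of $\Phi_S$ returns $L_S$. This is precisely the step where over-actuation is resolved: the moment-arm matrix $A$ alone has rank at most $N<m$ and cannot determine the $m$ tendon forces, but the $p=m-N$ additional flat outputs $Y$ supply exactly the missing equations, and Assumption~\ref{A1} guarantees the augmented map is square and invertible.

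With $L_S$ in hand the remaining states and the input follow by a chain of scalar inversions. Differentiating the reconstructed $L_S$ gives $\dot L_S$, and \ref{MSS2} then yields $u = \dot L_S + A^{T}\dot q$. Since $z_j = g_j(u_j)$ with $g_j$ invertible by \ref{hillG2} and strictly positive by \ref{hillG1}, each $z_j$ is determined and satisfies $z_j>0$; solving \ref{zjdef} for $a_j$ is then well posed because $f_j>0$ by \ref{hillF1} keeps the denominator nonzero. Finally, differentiating $a$ and inverting the activation dynamics \ref{MSS3} recovers each $n_j$, completing the reconstruction and hence the flat parameterization.

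The main obstacle is not the linear-algebraic step, which Assumption~\ref{A1} dispatches, but the well-posedness of the scalar inversions. Invertibility of $\Phi_S$ presupposes operation above the tendon slack limit, where $\Phi_S$ is strictly increasing; this should be stated as the admissible regime. The most delicate point is inverting \ref{MSS3} to obtain $n_j$ from $a_j$ and $\dot a_j$: the map $n\mapsto \sigma_j(n)(n-a_j)$ combines the decreasing factor $\sigma_j$ with the increasing factor $(n-a_j)$, so monotonicity is not automatic and must be argued from the bounds $T_{min}^{-1}\le\sigma_j\le T_{max}^{-1}$ together with the endpoint signs (the map is nonpositive at $n=0$, vanishes at $n=a_j$, and nonnegative at $n=1$). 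I would close this gap by showing the map is a strictly monotone bijection of the admissible excitation interval onto its range, which yields a unique $n_j$ and completes the proof.
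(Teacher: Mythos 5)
Your proposal follows essentially the same route as the paper's proof: compute $\tau$ from the rigid-body equation evaluated on $q,\dot q,\ddot q$, stack it with $Y$ to form $C\Phi_S(L_S)$, invert $C$ via Assumption~\ref{A1} and invert $\Phi_S$ componentwise above the slack limit to recover $L_S$, then differentiate and use the kinematic constraint to get $u$, set $z_j=g_j(u_j)$, solve the force balance~(\ref{zjdef}) for $a_j$ (with $f_j>0$ keeping the denominator nonzero, exactly as the paper implicitly relies on), and finally obtain $n_j$ from the activation dynamics~(\ref{MSS3}). The only divergence is your closing paragraph, where you go beyond the paper (which merely asserts ``solution of Eq.~\ref{MSS3}''): note, however, that your plan to prove $n\mapsto\sigma_j(n)(n-a_j)$ strictly monotone from the stated bounds alone cannot succeed in general, since the derivative $\sigma_j'(n)(n-a_j)+\sigma_j(n)$ can become negative for $n>a_j$ when $\sigma_j$ decays quickly (already for linear decreasing $\sigma_j$ with $T_{max}>2T_{min}$, typical of physiological values), so your endpoint-sign/intermediate-value argument gives existence of $n_j\in[0,1]$ but uniqueness would require an additional hypothesis on $\sigma_j$ --- a refinement of, not a defect relative to, the paper's own level of rigor at this step.
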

\begin{proof}
To construct the inverse, first note that the joint torque $\tau$ can be expressed as a function of the first $N$ flat outputs and derivatives up to second order by using Eq.~\ref{robdyn}. With a slight abuse of notation, denote this computed torque by $\tau(y,\dot{y},\ddot{y})$. Thus 
\begin{equation}
\left [ \begin{array}{l} 
         \tau(y,\dot{y},\ddot{y}) \\
         Y
        \end{array} \right ] =
        \left [ \begin{array}{c}
                 A \\ 
                 E
                \end{array} \right ] \Phi_S(L_S) = C \Phi_S(L_S). \label{fp1}
\end{equation}
Assumption~\ref{A1} requires that $C$ be invertible, then states $L_S$ can be obtained from the flat outputs provided $L_S$ remains outside the slack limit, where functions $\Phi_{Sj}$ are themselves invertible.  To show that states $a$ are parameterized by $y$ and its derivatives, let $c_j$ denote the $j$-th row of $C^{-1}$. Then inverting Eq.~\ref{fp1} and using the derivative of inverse formula it can be checked that
\begin{equation}
\dot{L}_{Sj}=\frac{[\dot{\tau}^T\; \; \dot{Y}^T]c_j^T}{\Phi'_{Sj}(L_{Sj})} \label{LSdotflat}
\end{equation}
where $\dot{\tau}$ is a function of the first $N$ flat outputs and their derivatives. Then $u_j$ can be found from Eq.~\ref{LSdot} and $L_j$ can be found from Eq.~\ref{Lj}. Therefore $z_j$ can be calculated as $g_j(u_j)$, and Eq.~\ref{zdef} gives states $a_j$ as a function of flat outputs and derivatives. In turn, differentiation gives $\dot{a}_j$, which yields the control inputs $n_j$ through solution of Eq.~\ref{MSS3}.  The formalism defining the endogenous transformation $\Psi$ involves an infinite prolongation of its arguments to accommodate as many derivatives of the control and the flat outputs as required~\cite{martinCDSreport}. In our case, $\Psi$ does not depend on $n$, and $\Psi^{-1}$ requires up to the third derivative of $Y$.  
\end{proof}
In the optimization proposed below, polynomials are used for $q(t)$, which simplifies the computation of the first $N$ flat outputs and all their derivatives. The torque $\tau(y,\dot{y},\ddot{y})$ is directly evaluated from Eq.~\ref{robdyn}. Likewise, $L$, $L_S$, $\dot{L}_{S}$, $u$, $z$ and $a$ constitute direct function evaluations based on flat outputs and derivatives, which are all polynomials. The computation of $n$, however, requires numerical differentiation and solution of $m$ decoupled nonlinear equations. 
\section{Optimization Problems}~\label{opt}
The objective of optimization is to provide an efficient method to specify joint trajectories $q(t)$ in an interval $[0,T]$ that satisfy given boundary values, along with corresponding solutions for the muscle states $L_S(t)$ and $a(t)$ and control inputs $n(t)$ that satisfy a set of constraints and arise from a minimum-effort criterion.

For joint trajectories, optimality is used to replicate qualities of natural motion such as smoothness and absence of large overshoot. For muscle states and controls, optimality criteria must reflect the often-invoked principle~\cite{acker10} that motion must be completed with the smallest effort. Due to the high redundancy found in MSS, it is possible to achieve the same joint trajectories with varying levels of muscle effort, and optimality is used to resolve control redundancy. 

Evidently, the trajectories resulting from optimization are dictated by the specific cost functions used to penalize unnatural motions and large muscle efforts, and a wide variation of cost functions and their parameters exist in the literature. Constraints have been formulated more consistently, to guarantee that activations and neural inputs remain in the interval $[0,1]$ and that tendon force solutions are non-negative (slack prevention).

In this work, a two-stage process is adopted, whereby joint trajectories that meet prescribed boundary values are selected first. The optimization criterion involved in this selection, if any, is relative to $q(t)$ alone. For instance, an integral-square error (ISE) criterion may used to transfer the joints between the specified boundary conditions. For tracking control, the trajectory $q(t)$ is preset and optimization does not apply. In this paper we consider that a function $q(t)$ is fit to the applicable boundary conditions. Polynomials are straightforward for this purpose, since their coefficients can be obtained by solving a system of linear equations. Regardless of the method, the selection of $q(t)$ will be referred to as Primary Motion Planning Problem (PMPP). The joint trajectories selected in the first stage are then used as data in the secondary optimization problem (SOP), which is based on a minimum-effort criterion.

The proposed two-stage optimization approach may be used in two settings: (A) open-loop solutions and (B) closed-loop receding-horizon (RH) implementations, for instance  model predictive control (MPC)~\cite{RawlingsBook,GruenePannek}, where optimization is solved repeatedly with initial conditions obtained from state feedback. In (A), the goal is to determine feasible neural input trajectories that optimally transfer the system between two points in a given time horizon. The solution is valid in the prescribed horizon, subject to an exact correspondence with the model and in the absence of external influences such as disturbance forces. In (B), the feedback introduced through state measurements and recursive optimization is intended to bring tolerance for model errors, disturbances and changes in reference commands. 

\noindent \emph{Secondary Optimization:} The $q(t)$ resulting from the PMPP defines the first $N$ flat outputs. The remaining $p$ are obtained under a minimum-effort criterion and a set of constraints, divided in two groups: boundary equality constraints and muscle slack inequality constraints. The specific set of boundary constraints is again determined by problem setting. Depending on whether initial condition matching and/or terminal equilibrium is sought, constraints may exist on $Y$ and its derivatives. 

The flatness parameterization can be used to show that a point constraint on the MSS state introduces corresponding constraints on $Y$ and $\dot{Y}$, as well as the acceleration and jerk at the same point. Moreover, it can be shown that equilibrium conditions require that that $Y$ and its first three derivatives vanish. Thus, the smallest degree to be considered for the polynomials in $Y(t)$ depends on the equality constraints that are included.  

Muscle slack constraints are captured by a non-negativity requirement on pertinent polynomials. Tendon forces are required to remain at or above a set of force reserves given by vector $\underline{F}_{T}$; that is, $\Phi_S(L_S) \succcurlyeq \underline{F}_T$ is enforced, where the symbol $\succcurlyeq$ is used to indicate component-wise inequalities. From Eq.~\ref{fp1}, tendon forces can be written as
\[\Phi_S(L_S)=C^{-1}\left [ \begin{array}{l} 
         \tau(y,\dot{y},\ddot{y}) \\ 
         Y
        \end{array} \right ]= C_\tau \tau+C_Y Y. \]
However, $\tau(y,\dot{y},\ddot{y})$ is non-polynomial for general linkage model matrices $M(q),C(q)$ and $g(q)$, and it is not possible to express the constraint as a non-negative polynomial inequality. Bounds on $\tau$ are used to overcome this difficulty. Since $y_i$ have been calculated for $i=1,2...N$ in the PMPP, $\tau$ can be numerically evaluated in the interval $[0,\;T]$ and bounds $\underline{\tau}_i$, $\bar{\tau}_i$ extracted. Such function evaluation can be performed at high speed and its outputs are also useful in subsequent calculations. As discussed in Section~\ref{intro}, there is no built-in conservativeness in requiring SOS properties for univariate polynomials of even degree as a means to enforce their non-negativity. 
The SOS inequality then takes the form $ C_Y Y(t) \succcurlyeq B$, where each entry of $B$ can be found from the linear program: 
\[ B_i=\underline{F}_{T,i}-\mathop{\mbox{min}}_{\underline{\tau} \leq \tau \leq \bar{\tau}} \{ c_{i} \tau \} \]
where $c_i$ is the $i$-th row of $C_\tau$. The muscle effort measure to be minimized in this paper is given by the integral of the co-contractions, which is a linear objective in the space of polynomial coefficients. Note that the non-negativity constraint on $Y$ eliminates the need for square or absolute value measures in the cost function. 

The SOP is now formulated as follows:
\begin{equation}
\begin{aligned}
& \underset{Y}{\text{minimize}} \;\; \int_0^T \sum_{i=1}^p Y_i(t) dt \text{  \emph{subject to}} \\
& C_Y Y(t) \succcurlyeq B,\;Y(t) \succcurlyeq 0 \\
& \mbox{\emph{applicable terminal and initial constraints on} } Y, \dot{Y}, \ddot{Y} \\
\end{aligned} \label{SOP01}
\end{equation}
 
As discussed in Section~\ref{intro}, optimization of an objective which is linear in the coefficients of the SOS polynomials subject to equality and SOS inequality constraints is equivalent to a semidefinite program. A variety of efficient solvers are available that perform the necessary symbolic processing to encode a given problem as a semidefinite program. In this paper, the sum of squares optimization toolbox for Matlab SOSTOOLS~\cite{sostools} is used in the examples.
\subsection{Recursive Feasibility of the SOP}
Recursive feasibility is of central importance to RH implementations such as MPC. An RH implementation involves solving the optimal control problem at the initial time $t_0$, followed by application of the optimal control restricted to the interval $[t_0,\;t_0+\delta]$. At time $t_0+\delta$, the closed-loop response of the plant is used to update the initial constraints and the optimization problem is solved again. The process is repeated indefinitely, establishing a feedback process based on open-loop optimal solutions. 

Let the constraints for $Y$ at time $t_0$ be described by $V_{Y,t_0}=v(x(t_0))$ where $x$ is the state of the MSS dynamics and $V_{Y,t_0}$ is a vector containing $Y$ and a finite number of its derivatives evaluated at $t_0$, as defined by a specific SOP formulation. Function $v$ uses the flatness parameterization to define constraint values from plant state data. We assume that terminal equilibrium constraints are used, i.e., $Y(t_0+T)=\bar{Y}$ and $Y^{(d)}(T)=0$ for derivative orders $d=1,2$ or higher, as specified by the problem.

Suppose a solution $Y^*(t)$ , $t \in [t_0,\;t_0+T]$ is found, leading to corresponding optimal state and control trajectories $x^*(t)$ and $n^*(t)$. Upon application of the restriction $n_\delta(t)=n^*(t),\;t \in [t_0,t_0+\delta]$, the response of the MSS at the end of this interval is $x_{cl}(t_0+\delta)$, with corresponding $Y_{cl}(t_0+\delta)$. 

With no model uncertainties or disturbances, $x_{cl}(t_0+\delta)=x^*(t_0+\delta)$ and $Y_{cl}(t_0+\delta)=Y^*(t_0+\delta)$, thus the new instance of the SOP at $t=t_0+\delta$ involves initial constraints $V_{Y,t_0+\delta}=v(x^*(t_0+\delta))$. The aim of this section is to prove that if the SOP is feasible at $t_0$, it will remain so at times $t_0+k\delta$, $k=1,2,...$. The result below requires a strict form of feasibility. 

\begin{D}\label{sfeas}
The SOP is said to be \emph{strictly feasible at time $s$} if there is a polynomial vector $Y^*$ and $\epsilon_0>0,\; \epsilon_1 >0$
such that $V_{Y^*,s}=v(x(s))$, $C_Y Y^*-B \succcurlyeq \ \epsilon_0 \bf{1}_m$ and $Y^* \succcurlyeq \epsilon_1 \bf{1}_p$.
\end{D}

\begin{T}
Suppose there is a solution $Y^*_k(t)$ to SOP (~\ref{SOP01}) which is strictly feasible at time $t_0+k\delta$ for some $k \in \mathbb{N}$, $k \geq 1$ and $0 < \delta<T$. Let the corresponding neural inputs be $n^*(t),\; t \in [t_0+k\delta,\;t_0+k\delta+T]$.  Suppose that the restriction $n_\delta(t)=n^*(t)$ for $t \in [t_0+k\delta,\;t_0+(k+1)\delta]$ is applied to the MSS system, resulting in trajectories $x_{cl}(t)$ and $Y_{cl}(t)$ defined on the same interval. Then the SOP is strictly feasible at time $t_0+(k+1)\delta$. 
\end{T}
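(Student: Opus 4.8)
The plan is to run the standard receding-horizon recursive-feasibility argument---shift the previous optimal solution and append a terminal segment that holds equilibrium---while overcoming the complication that, here, the candidate must be a genuine polynomial vector. Write $t_k = t_0 + k\delta$ and $t_{k+1} = t_k + \delta$, so the current horizon is $[t_k,\,t_k+T]$ and the next one is $[t_{k+1},\,t_{k+1}+T]$; since $0<\delta<T$ the two overlap on $[t_{k+1},\,t_k+T]$, an interval of positive length $T-\delta$. First I would dispose of the data-matching issue. In the nominal case (no model error, no disturbance) applying $n_\delta=n^*$ on $[t_k,t_{k+1}]$ gives $x_{cl}(t)=x^*(t)$, hence $Y_{cl}(t)=Y^*_k(t)$, on that subinterval; in particular the new initial constraint is $V_{Y,t_{k+1}}=v(x_{cl}(t_{k+1}))=v(x^*(t_{k+1}))$, which by the flatness parameterization is precisely $V_{Y^*_k,t_{k+1}}$, the value of $Y^*_k$ and its relevant derivatives evaluated at $t_{k+1}$. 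Thus the new initial constraint is met by \emph{any} polynomial that agrees with $Y^*_k$ at $t_{k+1}$ up to the highest constrained derivative order $r$.

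Next I would introduce the (non-polynomial) reference profile that the eventual candidate will approximate:
\[
Y_{\mathrm{ref}}(t) = \begin{cases} Y^*_k(t), & t \in [t_{k+1},\, t_k+T], \\ \bar Y, & t \in [t_k+T,\, t_{k+1}+T]. \end{cases}
\]
On the overlap, $Y_{\mathrm{ref}}=Y^*_k$ inherits the strict margins of Definition~\ref{sfeas}, i.e. $C_Y Y_{\mathrm{ref}}-B \succcurlyeq \epsilon_0 \mathbf{1}_m$ and $Y_{\mathrm{ref}} \succcurlyeq \epsilon_1 \mathbf{1}_p$; on the terminal segment $Y_{\mathrm{ref}} \equiv \bar Y$ equals the endpoint value of $Y^*_k$, which satisfies the same strict inequalities because $Y^*_k$ is strictly feasible at $t_k+T$. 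Hence $Y_{\mathrm{ref}}$ satisfies both strict inequalities with uniform positive margins on the entire new horizon. Moreover the terminal-equilibrium constraints force $(Y^*_k)^{(d)}(t_k+T)=0$ for the constrained orders $d=1,\dots,r$, so the two pieces join with matching derivatives up to order $r$; therefore $Y_{\mathrm{ref}}\in C^r[t_{k+1},\,t_{k+1}+T]$, and it already carries the required initial data at $t_{k+1}$ and the terminal equilibrium data at $t_{k+1}+T$.

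It remains to realize $Y_{\mathrm{ref}}$ by a polynomial without destroying these properties, and this is the step I expect to be the main obstacle. A naive affine time-rescaling (composing $Y^*_k$ with a contraction mapping the new horizon onto $[t_{k+1},\,t_k+T]$) does match the two endpoint \emph{values}, but the chain rule introduces factors $((T-\delta)/T)^d \neq 1$ that spoil the constrained \emph{initial derivatives}, so it cannot be used directly. Instead I would argue by approximation: since $Y_{\mathrm{ref}}\in C^r$, a Bernstein/Weierstrass construction yields a polynomial $P$ with $\|P^{(j)}-Y_{\mathrm{ref}}^{(j)}\|_\infty < \eta$ for all $j\le r$ on the horizon; the boundary values and derivatives of $P$ then differ from the prescribed initial and terminal data by at most $\eta$, and a single low-degree Hermite correction $\Delta$ (degree $\le 2r-1$) with $\|\Delta^{(j)}\|_\infty=O(\eta)$ restores those boundary data exactly. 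Setting $\tilde Y = P+\Delta$ produces a polynomial that meets all equality constraints and satisfies $\|\tilde Y - Y_{\mathrm{ref}}\|_\infty = O(\eta)$.

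Finally I would close the argument by choosing $\eta$ small. Because $C_Y$ is a fixed matrix and $Y_{\mathrm{ref}}$ carries the uniform margins $\epsilon_0,\epsilon_1$, taking $\eta$ below $\tfrac12 \min(\epsilon_0,\epsilon_1)/(1+\|C_Y\|)$ guarantees $C_Y \tilde Y - B \succcurlyeq \tfrac12\epsilon_0 \mathbf{1}_m$ and $\tilde Y \succcurlyeq \tfrac12\epsilon_1 \mathbf{1}_p$ throughout $[t_{k+1},\,t_{k+1}+T]$. Hence $\tilde Y$ is a strictly feasible polynomial for the SOP posed at $t_0+(k+1)\delta$, which is the claim. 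The crux throughout is that strict feasibility, through the positive margins $\epsilon_0,\epsilon_1$, supplies exactly the slack needed to absorb the polynomial-approximation error; under only non-strict feasibility this perturbation step would fail at the boundary of the constraint set, which is why Definition~\ref{sfeas} is stated in strict form.
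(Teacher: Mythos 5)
Your argument is correct, and its skeleton coincides with the paper's: your $Y_{\mathrm{ref}}$ is exactly the paper's concatenated candidate $Y_c$ (shifted previous optimum on the overlap, constant $\bar Y$ on the appended terminal segment), the nominal-model matching of the new initial data is handled identically, and the strict margins $\epsilon_0,\epsilon_1$ of Definition~\ref{sfeas} are consumed in the same way to absorb the approximation error. The one genuine divergence is the polynomial-realization step, which you correctly identify as the crux. The paper does not construct the polynomial by hand: it invokes an extension of the Weierstrass theorem (Corollary~3 of~\cite{Peet07}), which for any $\gamma>0$ directly supplies a polynomial $\hat Y$ with $\|\hat Y - Y_c\|_\infty \le \gamma$ that \emph{exactly} matches any finite set of bounded linear functionals, here the evaluations $L(Y_j)=Y_j^{(d)}(s)$ encoding the initial and terminal equality constraints. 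Your Bernstein-plus-Hermite construction is, in effect, an elementary and self-contained proof of that corollary in the univariate case: simultaneous uniform approximation of the derivatives up to order $r$ makes the boundary-data mismatch $O(\eta)$, and the fixed-structure Hermite correction is a bounded linear map of that mismatch, so $\|\Delta^{(j)}\|_\infty = O(\eta)$ and the equality constraints are restored exactly. What your route buys is constructiveness and independence from the cited result; what the citation buys is brevity and generality (arbitrary bounded linear functionals, not just endpoint derivative evaluations). A point genuinely in your favor: you verify that the terminal-equilibrium constraints make the two pieces join in $C^r$, whereas the paper asserts only that $Y_c$ is ``continuous but not polynomial''---yet the differentiability at the junction is implicitly needed both for your derivative approximation and for the paper's evaluation functionals on $\mathcal{C}^d$ to be defined and bounded, so making it explicit tightens the argument. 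Two cosmetic items: with value-plus-$r$-derivative conditions at each endpoint the Hermite correction has degree $2r+1$ rather than $2r-1$, and your final threshold for $\eta$ should also absorb the constant hidden in the $O(\eta)$ bound for $\|\tilde Y - Y_{\mathrm{ref}}\|_\infty$; since $\eta$ is free, neither affects correctness. (Incidentally, your margin bookkeeping is cleaner than the paper's, whose stated choice $\gamma=\min\bigl(\tfrac{\epsilon_0}{2},\,\tfrac{\epsilon_1}{2\|C_Y\|_\infty}\bigr)$ has the two roles swapped relative to inequalities (\ref{strictineq0})--(\ref{strictineq1}); the correct choice is $\gamma=\min\bigl(\tfrac{\epsilon_0}{2\|C_Y\|_\infty},\,\tfrac{\epsilon_1}{2}\bigr)$, which is what your condition delivers.)
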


\begin{proof}
Construct a function $Y_c(t)$ defined in $[t_0+(k+1)\delta, \; t_0+(k+1)\delta+T]$ as below
\begin{equation}
Y_c(t)= \begin{cases}
Y^*_k(t) & \mbox{if } (k+1)\delta \leq t-t_0 < k\delta+T \\
\bar{Y} & \mbox{if } k\delta+T \leq t-t_0 \leq (k+1)\delta+T. \end{cases} \label{Ycont}
\end{equation}
Assuming there are no model errors, $Y_{cl}(t_0+(k+1)\delta)=Y^*_k(t_0+(k+1)\delta)=Y_c(t_0+(k+1)\delta)$, therefore $V_{Y_c,t_0+(k+1)\delta}=v(x_{cl}(t_0+(k+1)\delta)$, showing that $Y_c$ meets the initial constraints. Since $Y_c=\bar{Y}$ is constant in $[t_0+k\delta+T,\;t_0+(k+1)\delta+T]$, it also meets the terminal equilibrium constraints. Moreover, $Y_c$ clearly satisfies the strict feasibility inequalities of Definition~\ref{sfeas}.  However, $Y_c$ is continuous but not polynomial.

An extension of the Weierstrass theorem can be invoked~\cite{Peet07} to find a polynomial vector $\hat{Y}$ that approximates $Y_c$ within any prescribed error bound and also satisfies the initial and terminal constraints. Note that an evaluation mapping $L(Y_j)=Y_j^{(d)}(s)$ for any derivative order $d$ and time $s$ applied to a continuous function $Y_j$ constitutes a bounded linear operator $L: \mathcal{C}^d[t_0+(k+1)\delta,\;t_0+(k+1)\delta+T] \mapsto \mathbb{R}$. This permits the application of Corollary 3 of~\cite{Peet07} to this case: for any $\gamma>0$, there is a polynomial vector $\hat{Y}$ such that 
\begin{equation}
||\hat{Y}-Y_c||_\infty \leq \gamma \label{ineqGamma}
\end{equation}
with $L_i(\hat{Y}_j)=L_i(Y_{cj})$, for $j=1,2..p$ and $i=1,2,..r$, where $r$ is the number of equality constraints in the SOP (dimension of $V_Y$). 

Then $V_{\hat{Y},t_0+(k+1)\delta}=V_{Y_c,t_0+(k+1)\delta}=V_{Y_{cl},t_0+(k+1)\delta}=v(x_{cl}(t_0+(k+1)\delta))$, thus $\hat{Y}$ meets initial constraints. Likewise, it follows that $Y_c$ meets the terminal constraints. Now, since the strict feasibility inequalities for $Y^*_k$ carry over to $Y_c$:
\[C_Y\hat{Y}=C_YY_c+C_Y(\hat{Y}-Y_c) \succcurlyeq   B+\epsilon_0 {\bf{1}}_m - ||C_Y(Y_c-\hat{Y})||_\infty \bf{1}_m \]
and the submultiplicative property together with inequality (~\ref{ineqGamma}) yield
\begin{equation}
C_Y\hat{Y} \succcurlyeq  B+(\epsilon_0-\gamma ||C_Y||_\infty) \bf{1}_m. \label{strictineq0}
\end{equation}
Likewise, the above argument holds with $B=0$, $C_Y=I$ and $\epsilon_1$ to give
\begin{equation}
\hat{Y} \succcurlyeq  (\epsilon_1-\gamma) \bf{1}_p \label{strictineq1}
\end{equation}
Taking $\gamma=$ min$(\frac{\epsilon_0}{2},\; \frac{\epsilon_1}{2||C_Y||_\infty})$ shows 
\begin{equation}
C_Y\hat{Y}  \succcurlyeq  B+ \frac{\epsilon_0}{2} {\bf{1}_m} \;\mbox{and } \hat{Y}  \succcurlyeq  \frac{\epsilon_1}{2} \bf{1}_p.
\end{equation}
Therefore the SOP is strictly feasible at $t_0+(k+1)\delta$. The above shows that strict feasibility for $k$ implies the same for $k+1$, and the case $k=0$ holds by assumption. Inductively, the SOP is strictly feasible for $k=0,1,2,..$. 
\end{proof}
\emph{Remark:} The above guarantees the existence of $\hat{Y}$ without indicating the required degree. In computations, the monomial basis $z$ is chosen to allow for a sufficiently large degree. 
\section{Linear Programming Solution}
Suppose that initial states $L_S(0)$ and $a(0)$ are not specified, but rather found as a consequence of optimization. Per the model, this results in no constraints placed on the initial values or rates of $Y$, and the SOP is formulated as in (\ref{SOP01}) without the additional equality constraints at the initial time. With $\dot{Y}(T)=0$ as a constraint (as would be implied by a terminal equilibrium constraint), the following result states that the optimal polynomial vector $Y$ is constant, determined by a feasible linear program.   

\begin{T}
Suppose the SOP is formulated with terminal constraint $\dot{Y}(T)=0$ and no initial constraints for $Y$ or its derivatives. Then the optimal polynomial vector is a constant given by the solution of a linear program which is feasible under the assumptions.
\end{T}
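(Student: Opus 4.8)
The plan is to exploit the fact that, with only the terminal derivative constraint $\dot Y(T)=0$ imposed (and no initial constraints), the SOP decouples across time, so that a constant is simultaneously admissible and pointwise optimal. First I would observe that the two inequality constraints $C_Y Y(t) \succcurlyeq B$ and $Y(t) \succcurlyeq 0$ are required to hold for \emph{every} $t \in [0,T]$, so they confine the instantaneous value $Y(t)$ to the fixed polytope
\[ P = \{ c \in \mathbb{R}^p : C_Y c \succcurlyeq B, \; c \succcurlyeq 0 \}. \]
Since the objective is the linear functional $\int_0^T {\bf 1}_p^T Y(t)\,dt$, I would introduce the static linear program $\ell^* = \min_{c \in P} {\bf 1}_p^T c$ with minimizer $c^*$. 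Pointwise membership $Y(t)\in P$ then gives ${\bf 1}_p^T Y(t) \geq \ell^*$ at every instant, whence $\int_0^T {\bf 1}_p^T Y(t)\,dt \geq T\ell^*$ for every feasible polynomial $Y$.

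The matching step is to verify that the constant trajectory $Y(t)\equiv c^*$ (a degree-zero polynomial, hence in the admissible class) is itself feasible and attains this bound. Feasibility is immediate: $c^*\in P$ discharges both inequality constraints at all $t$; the terminal constraint $\dot Y(T)=0$ holds trivially for a constant; and there are no initial constraints to violate. This last point is precisely where the hypothesis that $L_S(0)$ and $a(0)$ are free rather than prescribed is used, since with initial constraints the constant would generally be inadmissible and one would be driven back to the Weierstrass construction of the preceding theorem. The cost of $c^*$ is $T\,{\bf 1}_p^T c^* = T\ell^*$, matching the lower bound, so it is optimal and exhibits a constant optimizer determined by the linear program above.

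It remains to show this linear program is feasible, i.e. $P\neq\emptyset$, and here Assumption~\ref{A1} enters essentially. From $C^{-1}C=I_m$ with $C=[A;E]$ and $C^{-1}=[C_\tau\mid C_Y]$ one has $C_\tau A + C_Y E = I_m$; right-multiplying by ${\bf 1}_m$ and using that the row sums of $E$ equal $1$ (so $E{\bf 1}_m={\bf 1}_p$) together with $A{\bf 1}_m=\sigma_\tau$ yields the identity
\[ C_Y {\bf 1}_p = {\bf 1}_m - C_\tau \sigma_\tau. \]
By Assumption~\ref{A1}(3) the right-hand side is strictly positive, so uniform co-contraction is a strictly feasible direction: taking $c=\mu {\bf 1}_p$ gives $C_Y c = \mu({\bf 1}_m - C_\tau\sigma_\tau)$, which exceeds $B$ componentwise for $\mu$ large enough while keeping $c\succcurlyeq 0$. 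Hence $P$ is nonempty, and since ${\bf 1}_p^T c \geq 0$ on $P$ the program is bounded below and attains $c^*$.

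I expect the genuine content to be the first observation: recognizing that the pointwise-in-time inequalities collapse the infinite-dimensional polynomial program to a single finite-dimensional LP over the time-invariant polytope $P$, so that the pointwise minimizer of the integrand is globally admissible. The rest is bookkeeping, except for the feasibility identity $C_Y {\bf 1}_p = {\bf 1}_m - C_\tau\sigma_\tau$, which is the exact place where Assumption~\ref{A1}(3) earns its keep.
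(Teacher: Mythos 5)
Your proof is correct, but it takes a different route from the paper's. The paper argues by averaging: it assumes an optimal polynomial $Y^*$ exists, decomposes it as $Y^*=\alpha_0^*+\delta$ with $\alpha_0^*$ the mean value of $Y^*$ on $[0,T]$, notes $J(\delta)=0$ by the zero-mean property and linearity of the objective, and checks that $\alpha_0^*$ inherits feasibility by averaging the pointwise linear inequalities --- so the constant part of any optimizer is itself optimal, reducing the SOP to the SLP. You instead prove a pointwise lower bound: every feasible trajectory satisfies $Y(t)\in P$ for all $t$, hence $\int_0^T \mathbf{1}_p^T Y(t)\,dt \geq T\ell^*$ where $\ell^*$ is the static LP value, and the constant $c^*$ attains this bound. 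Your version buys something the paper's does not: it never presupposes that an optimal polynomial exists (the paper's opening ``Suppose $Y^*$ is an optimal solution'' is a standing hypothesis your argument discharges, since LP feasibility plus boundedness guarantees attainment of $c^*$), and it exhibits the optimal value explicitly as $T\ell^*$. The paper's averaging argument, conversely, shows that \emph{every} optimizer has an associated constant optimizer of equal cost, which is a slightly different structural statement. Your feasibility step is essentially identical to the paper's Appendix A: you derive the same identity $C_Y\mathbf{1}_p = \mathbf{1}_m - C_\tau\sigma_\tau$ (via $C_\tau A + C_Y E = I_m$ rather than the paper's computation of $C\mathbf{1}_m$ followed by inversion, which is equivalent), and your scaled direction $c=\mu\mathbf{1}_p$ with $\mu$ large matches the paper's construction $\alpha_0=\gamma\sigma$ with $\gamma=\max\{\max_i\{b_i/\sigma_i\},0\}$ in substance. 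One minor point worth being aware of: in the SOS encoding, non-negativity is enforced on all of $\mathbb{R}$, not merely $[0,T]$; this does not harm your argument, since the lower bound only needs the constraints on $[0,T]$ and your constant candidate satisfies them globally.
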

\begin{proof}
Suppose $Y^*$ is an optimal solution, corresponding to optimal value $J(Y^*)$. Decompose it as $Y^*=\alpha_0^*+\delta$, where constant $\alpha_0^*$ is the mean value of $Y^*$ in $[0,T]$. By linearity of the objective function it follows that
\begin{equation}
 J(Y^*)=J(\alpha_0^*)+J(\delta). \label{costDecomp}
\end{equation}
But $J(\delta)=0$, since the above decomposition implies that $\delta$ has zero mean in $[0,T]$. Therefore $\alpha_0^*$ attains the same cost as $Y^*$ and constitutes an optimal solution as long as it is feasible. Feasibility of $\alpha_0^*$ indeed follows compatibility with the terminal constraint and by averaging the inequality constraints, using linearity and $\frac{1}{T}\int_0^T\delta(t)dt=0$. \\
With constant $Y$, the SLP reduces to \\
\emph{Secondary Linear Program} (SLP): 
\begin{equation}
\begin{aligned}
& \underset{}{\text{minimize}} \;\;  \sum_{i=1}^p \alpha_{i0} \text{ subject to} \\
& C_Y \alpha_0 \succcurlyeq B,\;\;\alpha_0 \succcurlyeq 0
\end{aligned} \label{SLP}
\end{equation}
The SLP is always feasible under Assumption~\ref{A1}, as shown in Appendix~\ref{app1}. 
\end{proof}
\section{Simulation Examples}~\label{simex}
The simulation examples use an MSS model of an arm, which has been previously considered ~\cite{Jagodnik}. The MSS has two degrees of freedom and six muscles, that is, $n=2, m=6, p=4$. The muscles are organized in three agonist-antagonist pairs, where one pair operates across two joints, as seen in Fig.~\ref{mdldwg}.
\begin{figure}
\centering \includegraphics[width=\linewidth]{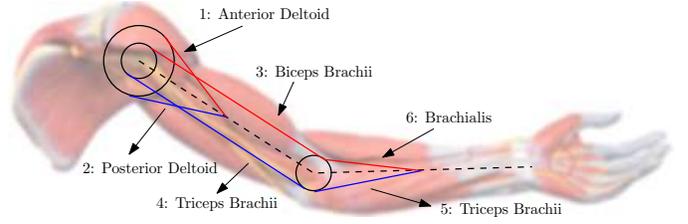} \caption{Two degree-of-freedom, six-muscle arm model used in the simulation example} \label{mdldwg}
\end{figure}
Listings of the mass, Coriolis and gravity vectors for the linkage and the corresponding parameters are available from~\cite{Jagodnik} or the authors.

The force capacities and kinematic data for the muscles are the same as in~\cite{Jagodnik} and contained in the simulation code associated with this paper~\cite{flatSOScode}. The first 2 flat outputs are $y_1=q_1\;y_2=q_2$. With six muscles, three pairwise co-contractions are defined: $Y_i=\frac{1}{2}(\Phi_S(LS_{2i-1})+\Phi_S(LS_{2i}))$, $i=1,2,3$. One additional flat output must be formed so that $C$ is full-rank. In this example, $y_6=\Phi_{S}(LS_5)$ is defined arbitrarily. It should be noted that unlike the example of~\cite{Jagodnik}, we assume that the arm moves in a vertical plane and is thus subjected to gravity torque.  All simulations were carried out with Matlab 9.5 running on an Intel Core i5, 7th generation processor. 

The first simulation involves prescribed initial conditions and terminal equilibrium. The objective is to transfer the arm from a nearly-horizontal position $q_0=[0\;10^\circ]^T$ to straight-up equilibrium. The remaining initial conditions were chosen as $\dot{q}_0=[0.001\;0.002]^T$ rad/s, $a_0=[0.135\;0.040\;0.404\;0.556\;0.414\;0.068]^T$ and $LS_0=[0.055\;0.054\;0.235\;0.192\;0.194\;0.018]^T$. 
These initial conditions and terminal equilibrium must be matched by the optimizer, requiring a full SOS polynomial solution for the optimal co-contractions $Y$. Initial condition matching places constraints on the initial joint acceleration and jerk. These values were calculated and used to construct identical polynomials for $q_1$ and $q_2$ that also meet the equilibrium constraints at the terminal time, chosen as $T=3$ s. An eight-order polynomial was required and obtained without optimization. 

A grid of 31 points was established for the calculation of torque bounds with Eq.~\ref{MSS1} and for numerical differentiation. The tendon force reserves were set to $\underline{F}_{T,i}=10$ N for all muscles.  The problem was encoded and solved using SOSTOOLS, set to use the SeDuMi semidefinite programming solver~\cite{S98guide}.
Since there are four equality constraints for $Y$, fourth-order polynomials were selected, leaving one degree of freedom per polynomial to optimize. The solution was found in 0.32 CPU seconds as reported by SOSTOOLS. For verification the optimal open-loop neural inputs were applied to a forward integration of the MSS dynamics. The results are shown in Fig.~\ref{exOLAct}. 
\begin{figure}
\centering \includegraphics[width=\linewidth]{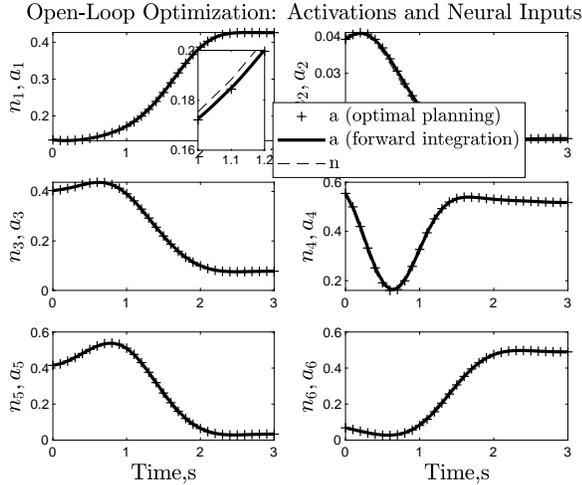} \caption{Activation histories in the open-loop simulation, comparing the optimally-planned and forward-integrated trajectories. The optimal neural inputs are also shown, displaying the small lag between $n$ and $a$ due to activation dynamics.} 
\label{exOLAct}
\end{figure}

The second  simulation illustrates the application of the proposed solution method for RH control. A closed-loop controller is implemented to transfer the position of the arm between two equilibrium positions. Cubic polynomials $q_1(t)$ and $q_2(t)$ are first selected that meet the pertinent initial and final boundary values. The resulting trajectories define the torque histories and the required bounds. 

The boundary conditions and control objectives in this simulation match those of Jagodnik~\cite{Jagodnik}. The initial position is $q_0=[20^\circ\;20^\circ]^T$ and the arm is to be transferred to $\bar{q}=[80^\circ\;80^\circ]^T$. The problem is solved with a prediction horizon of $T=0.5$ s discretized with 11 points ($\delta=0.05$ s). Optimization is solved at each time of the form $t=k\delta$, with $k=0,1,..$,  followed by application of the neural control inputs in the interval $[k\delta,\;(k+1)\delta]$. The position and velocity response to these inputs are used as initial conditions for the next optimization. No additional initial constraints are placed, and equilibrium conditions are requested at the terminal time of each optimal trajectory prediction. Therefore the problem reduces to linear programming, as discussed above. The tendon force reserves were all set to 1.

Using Matlab's {\tt linprog}, each solution is completed in as little as 0.015 seconds, with additional time required for post-optimization calculations and plant update integrations, perfomed with Matlab's {\tt ode23}. The total simulation time for each time step was 0.06 seconds, which is approximately real-time considering the value of $\delta$. Coarser simulations will, of course, run much faster than real-time. The simulation code is available~\cite{flatSOScode}. The results are shown in Figs.~\ref{exCLJoints} and~\ref{exCLn}. The results appear smoother than those generated in~\cite{Jagodnik} and completed within the same settling time. 
\begin{figure}
\centering \includegraphics[width=\linewidth,height=0.6\linewidth]{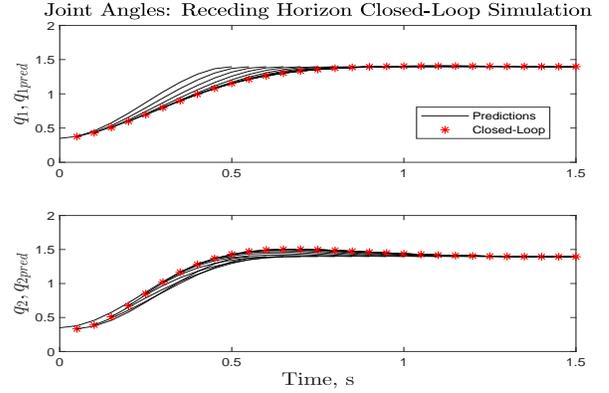} \caption{Predicted and closed-loop joint angles in the receding-horizon implementation.} \label{exCLJoints}
\end{figure}
\begin{figure}
\centering \includegraphics[width=\linewidth, height=0.6\linewidth]{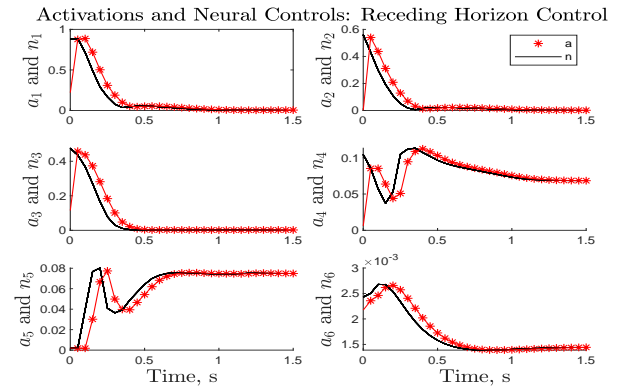} \caption{Activations and neural input histories in the closed-loop receding-horizon implementation.} \label{exCLn}
\end{figure} 
The difference between the predicted and actual closed-loop trajectories, more noticeable in joint 1, is expected with receding-horizon implementations. The associated suboptimality is well-understood and even quantifiable~\cite{GruenePannek}. In general, the influence of the prediction horizon on the closed-loop performance is difficult to characterize. 
\section{Concluding Remarks}
In comparison with collocation methods widely used in human motion control studies, the proposed approach does not include direct constraints on activation states or neural control inputs. In exchange, the proposed method can generate solutions in real time or faster and lends itself to analysis resulting in a recursive feasibility guarantee and simplification to a linear program. Further study is warranted concerning the use of the approach in fast optimal solvers for model predictive control. Our proof of recursive feasibility hinges on an accurate model. The effects of uncertainty deserve additional considerations. 
\section*{Acknowledgments}
The authors would like to acknowledge the support of the National Science Foundation, Cyber-Physical Systems Program through grant \# 1544702  Also, we acknowledge the reviewers for their valuable suggestions to streamline the main results.
\appendices
\section{Proof of Feasibility of the SLP}\label{app1}
Claim: For $C$ satisfying Assumption~\ref{A1}, the 
inequality $-C_Y\alpha_0+B \preccurlyeq 0$ admits non-negative solutions $\alpha_0$. 

To prove this, it is first shown that the row sums of $C_Y$ are positive, that is, $C_Y \mathbf{1}_p \succ 0$. From the assumption, it follows that
\[C\mathbf{1}_{m}=[\sigma_\tau^T\;|\;\mathbf{1}_p^T]^T\]
Since $C$ is invertible this means
\[C^{-1}[\sigma_\tau^T\;|\;\mathbf{1}_p^T]^T=C_\tau \sigma_\tau+C_Y \mathbf{1}_p=\mathbf{1}_{m}\]
thus
\[C_Y\mathbf{1}_p=\mathbf{1}_{m}-C_\tau \sigma_\tau \succ 0\]
Next, consider the SLP constraints
\begin{equation}
C_Y\alpha_0 \succcurlyeq B,\;\alpha_0 \succcurlyeq 0 \nonumber
\end{equation}
A feasible solution can be constructed by taking $\alpha_0=\gamma \sigma$
where $\sigma=C_Y\mathbf{1}_p \succ 0$ and the scalar $\gamma$ is chosen as follows
\[\gamma=\mbox{ max } \{ \mbox{ max } \{b_i/\sigma_{i} \},0 \}\]
Clearly $\alpha_0 \succcurlyeq 0$ and it can be directly verified that the first SLP constraint is satisfied. 
\bibliographystyle{IEEEtran}

\bibliography{AdvExercise,ctrlTHcourse,MPCcourse,SOS}

\end{document}